\documentclass{article}

\usepackage[sglblindworkshop]{neurips_2026}
\workshoptitle{Second Workshop on ML$\times$OR}

\usepackage[utf8]{inputenc}
\usepackage[T1]{fontenc}
\usepackage{hyperref}
\usepackage{url}
\usepackage{booktabs}
\usepackage{amsfonts}
\usepackage{amsmath}
\usepackage{amssymb}
\usepackage{amsthm}
\usepackage{microtype}
\usepackage{graphicx}
\usepackage{enumitem}
\usepackage[capitalize,noabbrev]{cleveref}

\theoremstyle{plain}
\newtheorem{theorem}{Theorem}
\newtheorem{proposition}[theorem]{Proposition}
\newtheorem{corollary}[theorem]{Corollary}
\theoremstyle{definition}
\newtheorem{definition}[theorem]{Definition}
\theoremstyle{remark}
\newtheorem{remark}[theorem]{Remark}

\newcommand{\calC}{\mathcal{C}}
\newcommand{\cov}{\mathrm{cov}}
\newcommand{\Recall}{\mathsf{Recall}}
\newcommand{\GreedyABAC}{\textsc{Greedy-ABAC}}

\title{Generation, Not Selection: LLM Proposals for Certified
Access-Control Policy Mining}

\author{%
  Gautam Kumar\thanks{Indian Institute of Technology Kharagpur,
    \texttt{singhgautam1124@gmail.com}} \qquad
  Ravi Sundaram\thanks{Northeastern University,
    \texttt{r.sundaram@northeastern.edu}} \qquad
  Shamik Sural\thanks{Indian Institute of Technology Kharagpur,
    \texttt{shamik@cse.iitkgp.ac.in}. Corresponding author.}
}

\begin{document}
\maketitle
\nolinenumbers

\begin{abstract}
Attribute-Based Access Control (ABAC) policy mining---recovering a compact,
auditable policy from an access log---is posed as minimum set cover and solved
greedily, which enjoys an $s^{*}(1+\ln m)$ guarantee. That guarantee is
\emph{conditional} on a precondition the literature never states: the candidate
pool handed to greedy must contain the optimal rules. Generation, not
selection, is the binding constraint, and existing generators are purely
\emph{syntactic}---they surface a rule only when enough log records witness it,
so they fail on precisely the rare exception rules that fill a heavy-tailed
policy. We replace the generator with a language model proposing rules
\emph{semantically}, from attribute names, and keep the combinatorial selector
to carry the guarantee: every proposal is filtered against the observed
denials, so soundness is independent of the proposal mechanism and an
unreliable proposer can only waste candidates, never grant access. Over five
seeds on each of two enterprise domains, adding semantic proposal to a
syntactic miner dominates it on \emph{ten of ten} runs---strictly better on
seven---with near-optimal policies ($9.2$--$9.6$ rules against $s^{*}=9$,
versus $11.2$--$14.6$) and false grants cut four- to sevenfold. Replacing the
syntactic miner rather than augmenting it wins only where witnesses are scarce.
An anonymization ablation isolates the mechanism: opaque attribute names cost
two thirds of the model's rule recovery while leaving the syntactic baselines
provably unmoved, and cost nothing where witnesses are plentiful. Semantic
knowledge and log evidence are substitutes.
\end{abstract}

\section{Generation, not selection, is the bottleneck}
\label{sec:decomposition}

Large organizations authorize millions of access requests---who may read which
document, deploy to which environment, approve which trade---against a policy
that must stay small enough for a human to audit. \emph{Policy mining} recovers
such a policy from a log of past decisions, and is the standard route to
migrating a legacy authorization system onto Attribute-Based Access Control
(ABAC)~\citep{XuStoller2015}. It is a combinatorial optimization problem with a
generative subproblem: the objective is a minimum cover, but the items to be
covered must first be \emph{invented}. This paper uses a language model for the
invention and keeps an optimization layer to certify the result---a division of
labour in which the generative component cannot violate the constraint that
matters.

An ABAC rule $r$ is a conjunction of attribute-value predicates and a policy
$\Pi$ a disjunction of rules. Given permitted requests $S^{+}$ ($|S^{+}|=m$)
and denied requests $S^{-}$, \emph{policy mining} seeks a
minimum-cardinality $\Pi$ with $\cov(\Pi,S^{+})=S^{+}$ and
$\cov(\Pi,S^{-})=\emptyset$~\citep{XuStoller2015}: minimum set cover over the
universe $S^{+}$, hence NP-hard~\citep{Karp1972}, and solved in deployed miners
by the greedy heuristic~\citep{XuStoller2015,CotriniBasin2018,AggarwalSural2023}.

Every such pipeline factors into \emph{generation}
($S^{+},S^{-} \to \calC$) followed by \emph{selection}
($\calC \to \hat\Pi$), but is analyzed as one. Greedy searches a finite
\emph{candidate pool} $\calC$, never the full rule space, which is exponential
in the number of attribute-value pairs $A$; the classical $\ln m$ analysis
bounds only the second stage.

\begin{definition}[Optimal rule recall]
\label{def:recall}
For a pool $\calC$ and an optimal policy $\Pi^{*}=\{r_1^{*},\dots,r_{s^{*}}^{*}\}$,
\[
  \Recall_{\tau}(\calC,\Pi^{*}) =
  \frac{1}{s^{*}}\sum_{i=1}^{s^{*}}
  \mathbf{1}\!\left[\max_{r \in \calC}
  \frac{|\cov(r,S^{+}) \cap \cov(r_i^{*},S^{+})|}{|\cov(r_i^{*},S^{+})|}
  \geq \tau\right].
\]
\end{definition}

\begin{theorem}[Conditional greedy guarantee]
\label{thm:conditional}
If $\Recall_{1}(\calC,\Pi^{*})=1$ then \GreedyABAC{} returns $\hat\Pi$ with
$|\hat\Pi| \leq s^{*}(1+\ln m)$, and a sample polynomial in $A$, $s^{*}$ and the
inverse accuracy and confidence parameters suffices for $\hat\Pi$ to be probably
approximately correct on unseen requests. If $\Recall_{1} < 1$ the size bound
does not apply.
\end{theorem}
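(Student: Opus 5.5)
The plan has three parts: reduce the size bound to the classical greedy set-cover analysis, get the PAC claim from an Occam's-razor argument, and show by a construction that the bound genuinely needs $\Recall_{1}=1$.

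\emph{Size bound.} Assume \GreedyABAC{} discards every candidate $r\in\calC$ with $\cov(r,S^{-})\neq\emptyset$ and then runs greedy set cover over universe $S^{+}$ using the sets $\cov(r,S^{+})$. Because $\Pi^{*}$ is feasible, each $r_i^{*}$ covers no denial. Hence $\cov(r,S^{-})\subseteq\cov(r_i^{*},S^{-})=\emptyset$ holds for any rule $r$ whose permitted coverage contains that of $r_i^{*}$. Such an $r$ exists for each $i$: $\Recall_{1}=1$ gives $r\in\calC$ with $\cov(r,S^{+})\supseteq\cov(r_i^{*},S^{+})$.

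This denial-freeness is where I expect the main obstacle. Containment of \emph{positive} coverage says nothing directly about $S^{-}$, and a superset rule may well admit a denial. I would handle it in one of two ways. The first is to read $\Recall_{1}=1$ as membership of $r_i^{*}$ itself, or of a rule with the same coverage on $S^{+}\cup S^{-}$, which is the intended reading in a filtered pool. The second is to apply the recall definition to the \emph{filtered} pool $\calC'=\{r\in\calC:\cov(r,S^{-})=\emptyset\}$. I would state the assumption explicitly in the form ``the filtered pool contains denial-free rules dominating each $r_i^{*}$.''

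With that settled, the filtered pool contains a cover of $S^{+}$ of size $s^{*}$. Chvátal's standard charging argument then applies unchanged. At each step at most $s^{*}$ pool sets cover the remaining $u$ elements, so greedy picks a set covering at least $u/s^{*}$ of them. Distributing price $1/(\text{new elements})$ gives total cost at most $s^{*}H_{m}\le s^{*}(1+\ln m)$.

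\emph{PAC claim.} The output $\hat\Pi$ is consistent with the sample, and it is a disjunction of at most $k=s^{*}(1+\ln m)$ conjunctions over $A$ attribute-value predicates. The hypothesis class therefore has size at most $2^{Ak}$, i.e.\ $\log|\mathcal H|=O(A s^{*}\ln m)$. Occam's razor says $m_{\text{tot}}\ge\frac{1}{\epsilon}(\log|\mathcal H|+\ln\frac1\delta)$ suffices. The only wrinkle is that $\ln m$ appears on the right-hand side, and I would resolve it by the standard fixed-point step: $m\ge c\,\frac{A s^{*}}{\epsilon}\ln\frac{A s^{*}}{\epsilon}+\frac{c}{\epsilon}\ln\frac1\delta$ works. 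This is polynomial in $A$, $s^{*}$, $1/\epsilon$ and $1/\delta$ (or $\log(1/\delta)$).

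\emph{Failure when $\Recall_{1}<1$.} I would give a family of instances. Take $\calC$ to consist of singleton rules, each matching exactly one permitted request (e.g.\ fully specified conjunctions), plus all but one of the optimal rules. Choose the missing rule to cover $m-s^{*}+1$ elements, so any pool cover must use about $m-s^{*}+1$ rules. This makes $|\hat\Pi|=\Theta(m)$ while $s^{*}$ stays fixed, which exceeds $s^{*}(1+\ln m)$ for large $m$. The same instances, with the number of attributes grown appropriately, also break the Occam size bound.
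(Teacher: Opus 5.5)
Your proposal is correct and follows the paper's route. The paper justifies the theorem in one line: Chv\'atal's greedy analysis on the universe $S^{+}$ gives the size bound, and a consistent-learner union bound (your Occam argument) gives the PAC claim; your fixed-point treatment of $\ln m$ and your explicit failure family add detail the paper omits. The denial issue you raise is settled by the paper's own convention, your second reading: in Section~\ref{sec:soundness} and Corollary~\ref{cor:risk}, $\calC$ is the validity-filtered admitted pool, so every dominating rule is automatically denial-free. Keep that reading and delete your earlier sentence asserting $\cov(r,S^{-})\subseteq\cov(r_i^{*},S^{-})$, which is false as written.
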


The size bound is \citet{Chvatal1979} on the universe $S^{+}$, the
generalization bound a consistent-learner union bound. The contribution is the
hypothesis. $\Recall_{1}=1$ is a property of the \emph{generator}, assumed
rather than established in the mining literature, and \Cref{sec:generation}
shows it fails where it matters. It is also measurable whenever ground truth is
known, which makes it our instrument.

\section{Why syntactic generation starves}
\label{sec:generation}

\paragraph{Seed-and-generalize is anchored.}
The dominant generator~\citep{XuStoller2015} is \emph{bottom-up}: for each
$x \in S^{+}$, start from the full conjunction $\mathbf{b}(x)$ and delete
predicates one at a time, accepting a deletion only while negative coverage
stays under $\delta_{\max}$. Every rule it can emit is a subset of a
\emph{single} request's predicates---\emph{anchored} to that request.

\begin{proposition}[Anchoring limits recall]
\label{prop:anchoring}
Let $r^{*} \in \Pi^{*}$ and $x \in \cov(r^{*},S^{+})$. If every rule on every
monotone path from $\mathbf{b}(x)$ to $r^{*}$ in the subset lattice covers
more than $\delta_{\max}|S^{-}|$ denials, then seed-and-generalize applied to
$x$ cannot emit $r^{*}$, nor any generalization of it.
\end{proposition}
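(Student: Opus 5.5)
The plan is to argue by reachability in the subset lattice, treating seed-and-generalize as a walk that only ever visits sets of predicates that passed its acceptance test. Identify a rule with its set of attribute-value predicates. A deletion of predicates from $\mathbf{b}(x)$ corresponds to moving downward in the lattice of subsets of $\mathbf{b}(x)$. Fewer predicates means a more general rule, so coverage of both $S^{+}$ and $S^{-}$ is monotone non-increasing along the walk's direction in the reverse sense: it is non-decreasing as predicates are removed. Since $x \in \cov(r^{*},S^{+})$, every predicate of $r^{*}$ is satisfied by $x$, so $r^{*} \subseteq \mathbf{b}(x)$ and $r^{*}$ is a node of this lattice. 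The first step is to make these facts explicit: any emitted rule is the endpoint of a sequence $\mathbf{b}(x)=R_0 \supsetneq R_1 \supsetneq \dots \supsetneq R_k$ in which each $R_{j+1}$ deletes one predicate from $R_j$.

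The second step is an invariant. Every intermediate rule $R_j$ with $j\ge 1$ was accepted, so $|\cov(R_j,S^{-})| \le \delta_{\max}|S^{-}|$. The starting point $R_0$ is either exempt or trivially fine. Now suppose the procedure emitted $r^{*}$, or some generalization $r' \subseteq r^{*}$. Then the sequence $R_0,\dots,R_k$ is a monotone path from $\mathbf{b}(x)$ down to $r'$. If $r'=r^{*}$, this path is one of the paths in the hypothesis, so some rule on it covers more than $\delta_{\max}|S^{-}|$ denials. That contradicts the invariant.

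The third step handles a strict generalization $r' \subsetneq r^{*}$. Here the path need not pass through $r^{*}$, so I would use monotonicity of coverage directly. Because $r' \subseteq r^{*}$, we have $\cov(r^{*},S^{-}) \subseteq \cov(r',S^{-})$. The hypothesis, applied to the trivial path ending at $r^{*}$, gives that some rule on every path to $r^{*}$ is overloaded, and in particular $r^{*}$ itself, or the nodes near it, must carry that load. The cleanest route is to reorder deletions. Given the emitted path to $r'$, build a monotone path from $\mathbf{b}(x)$ to $r^{*}$ by deleting, in the original order, only those predicates outside $r^{*}$. Each node $Q_j$ on this new path satisfies $Q_j \supseteq R_j$, so $\cov(Q_j,S^{-}) \subseteq \cov(R_j,S^{-})$, and $Q_j$ inherits the bound $\le \delta_{\max}|S^{-}|$. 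This produces a monotone path to $r^{*}$ on which every rule obeys the threshold, contradicting the hypothesis. The same argument also covers the case $r'=r^{*}$.

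I expect the main obstacle to be this third step: making the reordered path rigorous and checking the endpoint convention. The concern is whether "every rule on the path" includes $\mathbf{b}(x)$ itself. Since $Q_0=R_0=\mathbf{b}(x)$ is never tested, it should be excluded, or noted to cover few denials. The rest is bookkeeping with coverage monotonicity.
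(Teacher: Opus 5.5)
The paper gives no proof of this proposition. It is asserted on the strength of the description of seed-and-generalize and the remark that the procedure ``halts at the first dead end'', so your argument has to stand on its own, and it is sound.

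Define the shadow path by $Q_j := R_j \cup r^{*}$ for $j=0,\dots,k$. Consecutive $Q_j$ either coincide or differ by one deletion, $Q_k = r' \cup r^{*} = r^{*}$, and $Q_j \supseteq R_j$ gives the inherited bound for $j \geq 1$. Keep this indexing: if you renumber the surviving deletions consecutively, $Q_j \supseteq R_j$ can fail, e.g.\ when $p_1 \in r^{*}$ and $p_2 \notin r^{*}$.

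Your argument actually proves more than is stated. It needs only the weaker hypothesis that every such path contains \emph{some} overloaded non-initial rule. Under the hypothesis as literally written, the reordering is unnecessary. There $r^{*}$ lies on every such path, so it is itself overloaded. Any $r' \subseteq r^{*}$ then has $\cov(r',S^{-}) \supseteq \cov(r^{*},S^{-})$, so it fails the acceptance test.

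More importantly, follow your endpoint remark one step further. Every node $R$ on a monotone path from $\mathbf{b}(x)$ down to $r^{*}$ satisfies $R \supseteq r^{*}$. Hence $\cov(R,S^{-}) \subseteq \cov(r^{*},S^{-}) \subseteq \cov(\Pi^{*},S^{-}) = \emptyset$, because $\Pi^{*}$ is feasible. So $\mathbf{b}(x)$ covers not ``few'' denials but none, and neither does any other node of the interval $[r^{*},\mathbf{b}(x)]$. The hypothesis can therefore never be satisfied, in either reading, for any $\delta_{\max} \geq 0$. Your proof is valid, but the statement it proves is vacuous.

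The anchoring failure the paper has in mind cannot come from paths to $r^{*}$ being blocked, since that interval is always fully admissible. It can only come from the deletion order leaving the interval. The procedure deletes a predicate of $r^{*}$ while an extraneous predicate is still present. Removing that extraneous predicate may then push negative coverage over $\delta_{\max}|S^{-}|$, and the walk halts at a rule that is neither $r^{*}$ nor a generalization of it. A non-vacuous version of the proposition has to be phrased in terms of such detours.
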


The procedure halts at the first dead end, and this is the common case: a request
carries predicates irrelevant to the rule that authorized it, which a deletion
path cannot distinguish from the load-bearing ones.

\paragraph{Group intersection helps, but needs witnesses.}
Intersecting \emph{groups} of requests avoids the trap. Formal concept analysis
(FCA)~\citep{GanterWille1999} takes the context whose objects are permitted
requests and whose attributes are predicates; the closure $\mathrm{cl}(G)$---the
predicates every request in $G$ shares---is the tightest conjunction the whole
group satisfies, hence a candidate rule by construction, with no search.

\begin{proposition}[FCA recovers optimal rules]
\label{prop:fca}
If $r^{*} \in \Pi^{*}$ covers $x_1 \neq x_2$ then
$r^{*} \subseteq \mathrm{cl}(\{x_1,x_2\})$, with equality when $x_1,x_2$ share
no predicate outside $r^{*}$.
\end{proposition}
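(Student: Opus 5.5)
The argument identifies a rule with its set of attribute-value predicates, and a request $x$ with its full conjunction $\mathbf{b}(x)$, which is also a set of predicates. A rule $r$ covers $x$ iff $r\subseteq\mathbf{b}(x)$, since a conjunction is satisfied exactly when every one of its predicates holds. In the formal context, the closure of a group is the intersection of the predicate sets of its members:
\[
  \mathrm{cl}(\{x_1,x_2\}) = \mathbf{b}(x_1)\cap\mathbf{b}(x_2).
\]
The proposition then reduces to two elementary set-theoretic steps.

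\textbf{Containment.} Because $r^{*}$ covers $x_1$, we have $r^{*}\subseteq\mathbf{b}(x_1)$. Because it also covers $x_2$, we have $r^{*}\subseteq\mathbf{b}(x_2)$. Hence $r^{*}\subseteq\mathbf{b}(x_1)\cap\mathbf{b}(x_2)=\mathrm{cl}(\{x_1,x_2\})$. The hypothesis $x_1\neq x_2$ is not needed for this inclusion. It only matters for the equality part and for the remark that a single seed would give back the anchored conjunction $\mathbf{b}(x_1)$.

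\textbf{Equality.} I read ``share no predicate outside $r^{*}$'' as
\[
  \bigl(\mathbf{b}(x_1)\cap\mathbf{b}(x_2)\bigr)\setminus r^{*}=\emptyset.
\]
Under this reading the closure is contained in $r^{*}$. Combined with the containment step, this gives $\mathrm{cl}(\{x_1,x_2\})=r^{*}$.

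The main thing to get right is not computational but definitional. The closure operator has to be fixed as the intersection of the predicate sets, and ``covers'' has to be read as predicate-set inclusion. This requires each attribute to take a single value per request, so that ``$a=v$'' is a predicate $x$ either carries or does not. If multi-valued or negated predicates were allowed, I would first normalise the rules to positive attribute-value conjunctions, which is the paper's setting, and then run the same argument.

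I would also note a consequence that is useful downstream. If $\mathrm{cl}(\{x_1,x_2\})$ is a strict superset of $r^{*}$, it is still a specialization of $r^{*}$. Since $r^{*}$ covers no denials, neither does the closure. So the closure is a sound candidate, and recovering $r^{*}$ only requires choosing witnesses that agree on nothing outside $r^{*}$.
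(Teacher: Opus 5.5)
Your proof is correct and is the same argument the paper relies on without writing it out: with $\mathrm{cl}(\{x_1,x_2\})=\mathbf{b}(x_1)\cap\mathbf{b}(x_2)$ (the paper's ``predicates every request in $G$ shares'') and coverage read as predicate-set inclusion, both the containment and the equality are immediate, and in fact the equality condition is also necessary. One small correction to your aside: your equality step does not use $x_1\neq x_2$ either, so that hypothesis only rules out the degenerate single-seed case rather than mattering for the proof.
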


This is a large improvement, and also the source of a hard limit. $r^{*}$ is
reachable only if some \emph{enumerated} group of size $\leq k$ intersects down
to it, and enumerating all of them costs $O(m^{k}A)$---$5.5\times10^{9}$
triples at $m=3200$, so the enumeration must be subsampled. Both the arity
limit and the sampling budget bite hardest on the rules with fewest
\emph{witnesses}: a rule governing three requests yields three pairs and one
triple, and if those few requests share extraneous predicates, every available
closure over-specifies.

Witness starvation is therefore generic for the tail rather than an artifact,
since real access logs are heavy-tailed---a small core of rules governs most
traffic while a long tail of exceptions (contractor accounts, temporary grants,
legacy overrides) is individually rare~\citep{CotriniBasin2018}. Both syntactic
generators are bounded by how often the log \emph{happens} to exhibit a rule. A
language model is not: it can propose
\texttt{Contractor}~$\wedge$~\texttt{Sandbox} from what those words mean,
whether the log witnesses it three times or three hundred.

\section{Making an unreliable proposer safe}
\label{sec:soundness}

A stochastic proposer inside an authorization pipeline invites the obvious
objection, and the pipeline's structure answers it. Write
$\hat\Pi = \GreedyABAC(S^{+},S^{-},\calC)$ where
$\calC = \{r \in \tilde\calC : \cov(r,S^{-}) = \emptyset\}$ is the
\emph{admitted} pool obtained by validity-filtering the multiset $\tilde\calC$
proposed by an arbitrary mechanism $\mathcal{P}$.

\begin{proposition}[Generator-independent soundness]
\label{prop:soundness}
For every proposal mechanism $\mathcal{P}$---adversarial, stochastic, or
hallucinating---$\cov(\hat\Pi,S^{-}) = \emptyset$.
\end{proposition}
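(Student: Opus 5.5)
The argument reduces soundness to the validity filter alone and never looks inside the proposal mechanism $\mathcal{P}$. The plan has three steps.

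\emph{Step 1: greedy only picks from the pool.} Whatever the tie-breaking or stopping rule, \GreedyABAC{} builds $\hat\Pi$ by repeatedly selecting a rule from its input pool $\calC$. It never modifies a rule or synthesizes a new one. Hence $\hat\Pi \subseteq \calC$. This holds for every execution of greedy, including one that halts before covering all of $S^{+}$ because the pool has low recall, the case excluded by \Cref{thm:conditional}. Soundness therefore does not depend on the recall hypothesis.

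\emph{Step 2: the filter removes every rule touching a denial.} By the definition of the admitted pool, every $r \in \calC$ satisfies $\cov(r,S^{-})=\emptyset$. This holds for an arbitrary multiset $\tilde\calC$, whether it is adversarial, random, or contains hallucinated or even syntactically degenerate rules, provided coverage is evaluated on the observed records. One point to note: a proposed rule that mentions attribute values absent from the log simply covers nothing, so it passes the filter harmlessly. Membership in $\calC$ is decided by an exact, deterministic check on the fixed finite set $S^{-}$. The randomness of $\mathcal{P}$ therefore only affects which rules are present, never whether an admitted rule is valid.

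\emph{Step 3: coverage of a disjunction.} Since a policy is a disjunction of rules, $\cov(\Pi,S) = \bigcup_{r\in\Pi}\cov(r,S)$. Combining the three steps gives
\[
\cov(\hat\Pi,S^{-})=\bigcup_{r\in\hat\Pi}\cov(r,S^{-})\subseteq\bigcup_{r\in\calC}\cov(r,S^{-})=\emptyset .
\]
If $\hat\Pi$ is empty, the union is empty trivially.

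The only step that needs care, and the one I expect to be the main obstacle, is Step 1. The claim $\hat\Pi\subseteq\calC$ requires that \GreedyABAC{} performs no post-processing that enlarges rules, such as merging rules or dropping predicates to simplify them. Such post-processing appears in some miners, for example the Xu--Stoller simplification phase. I would handle this by stating explicitly that \GreedyABAC{} is pure selection. If any simplification step were present, it would have to be re-filtered against $S^{-}$, and the same argument would then go through.
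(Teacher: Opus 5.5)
Your proof is correct and follows the paper's argument: $\hat\Pi\subseteq\calC$, together with $\cov(\hat\Pi,S^{-})=\bigcup_{r\in\hat\Pi}\cov(r,S^{-})$ and the filter's guarantee that each term of this union is empty. Your caveat that \GreedyABAC{} must be pure selection, with no unfiltered simplification step, states explicitly an assumption the paper uses without saying so.
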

\begin{proof}
$\hat\Pi \subseteq \calC$ and $\cov(\cdot,S^{-})$ is monotone under union, so
$\cov(\hat\Pi,S^{-}) = \bigcup_{r \in \hat\Pi}\cov(r,S^{-}) = \emptyset$.
\end{proof}

\begin{corollary}[One-dimensional risk surface]
\label{cor:risk}
For any $\mathcal{P}$: (i) the false-grant rate on $S^{-}$ is zero
unconditionally; (ii) if $\Recall_1(\calC,\Pi^{*})=1$ then
$|\hat\Pi| \leq s^{*}(1+\ln m)$; (iii) if recall fails, only size and coverage
degrade. The generative component's influence is confined to
$\Recall_{\tau}$, which is measurable.
\end{corollary}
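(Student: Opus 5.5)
The corollary follows by combining \Cref{prop:soundness} with \Cref{thm:conditional}, applied to the admitted pool $\calC$. I will prove the three clauses in order.

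For (i), the false-grant rate on $S^{-}$ is $|\cov(\hat\Pi,S^{-})|/|S^{-}|$. \Cref{prop:soundness} gives $\cov(\hat\Pi,S^{-})=\emptyset$ for every $\mathcal{P}$, so the rate is $0$. No hypothesis on recall or on $\mathcal{P}$ is used; this is what ``unconditionally'' means.

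For (ii), I apply \Cref{thm:conditional} verbatim with $\calC$ taken to be the admitted pool. There is one subtlety to check: recall is measured on the \emph{admitted} pool, not on the raw proposals $\tilde\calC$. I need the rules that witness $\Recall_1=1$ to be valid, and the filter guarantees this, since every $r\in\calC$ has $\cov(r,S^{-})=\emptyset$. Greedy therefore runs on a pool in which, for each $r_i^{*}$, some $r\in\calC$ satisfies $\cov(r_i^{*},S^{+})\subseteq\cov(r,S^{+})$. The Chv\'atal argument over the universe $S^{+}$ then compares against the feasible cover $\{r\}$ built from these witnesses, which has size at most $s^{*}$, and yields $|\hat\Pi|\le s^{*}(1+\ln m)$. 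The key observation is that $\tau=1$ recall lets each optimal rule be replaced by a pool rule covering a superset of its positives. This is exactly the hypothesis of \Cref{thm:conditional}, so no new work is needed.

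For (iii), I argue that when recall fails, nothing other than size and positive coverage can change. The guarantee in (i) holds for every $\mathcal{P}$, so soundness cannot degrade. The only remaining quantities are $|\hat\Pi|$ and $|\cov(\hat\Pi,S^{+})|/m$. The latter may fall below $1$ if some positive lies in no admitted rule, in which case greedy simply leaves it uncovered. The final sentence is interpretive: the output depends on $\mathcal{P}$ only through $\calC$, and the performance guarantees depend on $\calC$ only through $\Recall_\tau$, which is computable from $\calC$, $\Pi^{*}$ and $S^{+}$ whenever ground truth is known.

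The main obstacle is making (iii) precise rather than rhetorical. I will state it as the claim that the negative-side error is identically zero for all $\mathcal{P}$, so that any dependence on $\mathcal{P}$ appears only in size and positive coverage. I will not attempt a quantitative degradation bound.
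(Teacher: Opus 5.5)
Your proposal is correct and takes the same route as the paper, which obtains the corollary directly from Proposition~\ref{prop:soundness} for (i) and (iii), and from Theorem~\ref{thm:conditional} applied to the admitted pool for (ii). Two of your details are the right ones to make it precise: the $\Recall_1$ witnesses lie in the filtered pool, so they are valid and form a feasible cover of size at most $s^{*}$; and (iii) should be read as a statement about observed denials, consistent with Remark~\ref{rem:honest}.
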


The proposition is elementary, and that is the point: a hallucinated rule wastes
a candidate rather than granting access. Nor is the selector ceremonial---in a
typical run it discards a third of the proposals as unsound and covers $S^{+}$
with roughly half of what survives (Appendix \ref{app:repro}).

\paragraph{Refinement.}
Round one asks for rules cold, and most proposals are rejected without the model
learning why. We close the loop: each rejected rule is returned with its
permitted-versus-denied counts, examples of the denials it matched, and a sample
of still-uncovered permitted requests. One such round is what turns semantic
proposal from competitive into decisive (\Cref{sec:experiments}).

\begin{remark}[What the guarantee does \emph{not} cover]
\label{rem:honest}
Proposition \ref{prop:soundness} is soundness with respect to the \emph{observed} denials.
An overly general rule that avoids every denial in $S^{-}$ may still authorize
unseen requests; controlling \emph{distributional} false grants is the
uniform-convergence half of \Cref{thm:conditional}, not the filter. We therefore
report held-out false-grant rate throughout and treat it, not the filter, as the
evidence on generalization.
\end{remark}

\section{Experiments}
\label{sec:experiments}

\paragraph{Setup.}
Two synthetic enterprise domains with certified ground truth: \textbf{D1}
(general IT, $10$ attribute dimensions, $A=41$) and \textbf{D2} (financial
services, $A=37$), each governed by nine hand-written rules. Ground truth is
required, since $\Recall_{\tau}$ and the excess $|\hat\Pi| - s^{*}$ are not
computable on a real log and no public ABAC dataset ships a certified optimal
policy~\citep{XuStoller2015,CotriniBasin2018}---and the usual benchmark
publishes anonymized integer identifiers, leaving a semantic proposer nothing to
read. Each log has $2{,}000$ permitted and $600$ denied requests, $20\%$ held
out. Rule usage is heavy-tailed ($p_k \propto k^{-2.5}$) and free attributes are
drawn from correlated personas, without which FCA is trivially optimal
(Appendix \ref{app:design}). Four generators---seed-and-generalize (SG), FCA closures
over pairs and triples, LLM, and LLM~$\cup$~FCA---share the \emph{same} validity
filter and selector, so differences are attributable to generation. The proposer
is Claude Opus~5 (max effort) via its web interface, two rounds, five seeds per
domain. See results in Table \ref{tab:main}.

\begin{table}[t]
  \caption{Five seeds per domain, mean $\pm$ sd, $s^{*}=9$; the rarest rule has
  $6$--$12$ witnesses on D1, $19$--$23$ on D2. Held-out false grants are the
  fraction of \emph{unseen} denials authorized by $\hat\Pi$; on \emph{observed}
  denials every pool admits zero by construction (Proposition \ref{prop:soundness}).}
  \label{tab:main}
  \centering
  \small
  \begin{tabular}{@{}llrrrr@{}}
    \toprule
    & & $\Recall_{0.9}$ & $|\hat\Pi|$ & held-out recall & held-out false grants \\
    \midrule
    D1 & FCA            & $0.800 \pm 0.050$ & $14.6 \pm 2.8$ & $0.988 \pm 0.006$ & $0.048 \pm 0.016$ \\
       & LLM            & $0.911 \pm 0.122$ & $\mathbf{8.4 \pm 0.9}$ & $0.994 \pm 0.008$ & $\mathbf{0.003 \pm 0.005}$ \\
       & LLM $\cup$ FCA & $\mathbf{0.956 \pm 0.061}$ & $9.6 \pm 1.3$ & $\mathbf{0.998 \pm 0.003}$ & $0.013 \pm 0.007$ \\
    D2 & FCA            & $0.933 \pm 0.061$ & $11.2 \pm 2.0$ & $0.995 \pm 0.006$ & $0.013 \pm 0.013$ \\
       & LLM            & $0.822 \pm 0.168$ & $\mathbf{8.8 \pm 0.8}$ & $0.975 \pm 0.023$ & $\mathbf{0.003 \pm 0.005}$ \\
       & LLM $\cup$ FCA & $\mathbf{0.978 \pm 0.050}$ & $9.2 \pm 0.4$ & $\mathbf{1.000 \pm 0.000}$ & $\mathbf{0.002 \pm 0.004}$ \\
    \bottomrule
  \end{tabular}
\end{table}

\paragraph{Semantic proposal dominates as a complement, and sometimes alone.}
Adding the LLM pool to FCA's matches or exceeds FCA on \emph{all ten} runs and
strictly exceeds it on seven, and is better on every other axis in both
regimes---smaller policies, higher held-out coverage, false grants $4$ to $7$
times lower. Replacing FCA rather than augmenting it wins only where witnesses
are scarce: the LLM alone beats FCA on five of five D1 seeds and loses on four
of five D2 seeds, a split the mechanism paragraphs below explain.

\paragraph{Refinement is what does it.}
LLM-alone recall is $0.645 \to 0.911$ across the two rounds on D1 and
$0.578 \to 0.822$ on D2, while the unsafe-proposal rate falls from $53\%$ to
$27\%$: told which denials it hit, the model repairs over-generality. On D1 that
is the difference between trailing FCA and beating it.

Every rejection was semantic rather than malformed---no schema violations in
$175$ proposals (Appendix \ref{app:repro})---which is exactly what Proposition
\ref{prop:soundness} absorbs. Seed-and-generalize, the literature baseline, is
omitted from \Cref{tab:main} for space; it recovers $0.53$--$0.58$ of the
optimal rules and inflates the policy four- to fivefold (Appendix \ref{app:perseed}).

\paragraph{Operational cost.}
Exhaustive FCA would enumerate $\binom{m}{3}$ groups---$5.5\times10^{9}$ at
$m=3200$---so it must subsample, whereas the proposer makes two calls on a
prompt flat in $m$ ($20.2$k characters throughout), giving generation cost
$O(1)$ in log size where FCA's is not (Appendix \ref{app:cost}). Latency and per-token
cost we could not measure under the manual protocol, and we flag that as a gap
rather than estimate it.

\paragraph{The mechanism is semantic.}
Replacing attribute names with \texttt{A1}--\texttt{A10} and values with opaque
\texttt{v001}--\texttt{v041}---identical instance, prompt and two rounds---costs
the LLM two thirds of its rule recovery on D1 seed~$0$: $\Recall_{0.9}$ falls
$1.000 \to 0.333$, held-out recall $1.000 \to 0.750$. SG and FCA are
\emph{bit-identical} across conditions on every metric, and structurally must
be: they use only subset relations over predicate bitmasks, renaming is a
bijection on bit positions, and the substitution is applied at the prompt
boundary and inverted before scoring. The baselines are mathematically pinned,
so the LLM is the only component that can move.

Semantics is moreover needed to \emph{use} the feedback and not only to propose:
refinement lifts recall by $0.333$ with names and $0.111$ without. And the
effect vanishes on D2 ($0.78$ either way), where every rule is well witnessed.
Semantics, on this evidence, \emph{substitutes for witnesses}---the same axis
that separates the LLM from FCA---which is why anonymization \emph{inverts} the
D1 ranking, the anonymized proposer ($0.333$) falling below FCA ($0.778$)
(Appendix \ref{app:anon}). On a log of opaque identifiers, use the syntactic miner.

\paragraph{Capability threshold.}
The same prompt, loop and filter give $\Recall_{0.9} = 0.33$ with Claude
Sonnet~5 at medium effort against $1.00$ with Opus~5 at max effort, while
Gemini Flash and Pro are indistinguishable (Appendix \ref{app:tier}). Semantic proposal
is a property of sufficiently capable models rather than of language models in
general, and improves with the proposer at no change to the guarantee. Sweeping
the tail exponent locates the other boundary: FCA is exactly optimal for
$s \leq 1.5$, degrading to $0.72$ at $s = 3.5$ (Appendix \ref{app:zipf}).

\section{Discussion}
\label{sec:discussion}

The decomposition generalizes beyond access control. Wherever a combinatorial
selector operates over a candidate set someone had to generate, its guarantee is
conditional on the generator, and a generative model is a strong generator
precisely where enumeration is expensive or semantically blind. The pairing is
safe because the selector enforces a feasibility constraint the proposer cannot
weaken: the generator proposes, the optimizer certifies.

Limitations. Five seeds on two synthetic domains is a narrow base, and both
replications use a single proposer; LLM-alone superiority holds only in the
witness-scarce regime. Provenance is weaker than an API run: queries went
through web interfaces in August 2026, where consumer tiers do not pin a model
version, so exact re-execution is impossible and we instead release the harness,
generators, prompts and verbatim replies, from which every number is
recomputable (Appendix \ref{app:repro}). Remark \ref{rem:honest} applies throughout---the
unconditional guarantee is soundness on observed denials, not distributional.
Finally, the anonymization result should be read as a \emph{measured utility
cost} of attribute-name anonymization, on a fixed instance with provably unmoved
baselines; whether that cost outweighs the privacy benefit it purchases is not
something this experiment settles.

\bibliographystyle{plainnat}
\bibliography{references}

\appendix
\section{Benchmark design}
\label{app:design}
Free attributes are drawn from correlated personas rather than independently,
and this is essential rather than decorative. Under independent sampling, any
two positives covered by the same optimal rule share \emph{only} that rule's
predicates, so Proposition \ref{prop:fca} holds with equality almost surely and FCA
recovers every rule from almost any pair---scoring a degenerate
$\Recall_{1.0}=1.0$ on which no generator can be distinguished from another. Our
first generator had exactly this defect. Correlation is what makes a pairwise
closure \emph{over-specify}: the two requests share extraneous predicates beyond
the rule that authorized them, which is also what real organizations look like,
engineers skewing to one region, to managed devices, to source repositories.

Rule usage follows $p_k \propto k^{-s}$ with $s=2.5$, leaving D1's rarest rule
$6$--$12$ witnesses and D2's $19$--$23$. The two domains were written before any
model was run, and D2 was not selected for being easier; the witness gap between
them is a consequence of the schema sizes, and is what makes the pair useful.

\section{Per-seed results, both domains}
\label{app:perseed}
\Cref{tab:main} reports means; per-seed values are below, including seed-and-generalize (SG), which the main table omits for space. $s^{*}=9$ throughout.

The residual errors are benign. On D1 seed~$0$ the LLM recovers eight of the nine ground-truth rules bit-exactly, and misses the ninth only because the log cannot distinguish it from a strictly more general rule covering the same $54$ positives---which leaks the single held-out denial that is that seed's entire false-grant rate. The failure mode is an unidentifiable rule rather than a wrong one.

\begin{table}[h]
  \caption{D1, general IT (rarest rule 6--12 witnesses). Per seed: $\Recall_{0.9}$ / $|\hat\Pi|$ / held-out false-grant rate.}
  \centering\small
  \begin{tabular}{@{}lccccc@{}}
    \toprule
    & seed 0 & seed 1 & seed 2 & seed 3 & seed 4 \\
    \midrule
    SG & 0.22 / 55 / 0.150 & 0.67 / 18 / 0.050 & 0.56 / 61 / 0.108 & 0.56 / 36 / 0.100 & 0.67 / 37 / 0.058 \\
    FCA & 0.78 / 14 / 0.050 & 0.78 / 12 / 0.042 & 0.89 / 12 / 0.042 & 0.78 / 17 / 0.033 & 0.78 / 18 / 0.075 \\
    LLM & 1.00 / 9 / 0.008 & 0.78 / 8 / 0.008 & 1.00 / 9 / 0.000 & 0.78 / 7 / 0.000 & 1.00 / 9 / 0.000 \\
    LLM $\cup$ FCA & 1.00 / 9 / 0.008 & 0.89 / 9 / 0.017 & 1.00 / 9 / 0.008 & 0.89 / 12 / 0.025 & 1.00 / 9 / 0.008 \\
    \bottomrule
  \end{tabular}
\end{table}

\begin{table}[h]
  \caption{D2, financial services (rarest rule 19--23 witnesses). Per seed: $\Recall_{0.9}$ / $|\hat\Pi|$ / held-out false-grant rate.}
  \centering\small
  \begin{tabular}{@{}lccccc@{}}
    \toprule
    & seed 0 & seed 1 & seed 2 & seed 3 & seed 4 \\
    \midrule
    SG & 0.44 / 73 / 0.175 & 0.89 / 21 / 0.050 & 0.44 / 46 / 0.092 & 0.44 / 46 / 0.125 & 0.67 / 37 / 0.108 \\
    FCA & 1.00 / 9 / 0.000 & 0.89 / 13 / 0.008 & 0.89 / 12 / 0.033 & 1.00 / 9 / 0.008 & 0.89 / 13 / 0.017 \\
    LLM & 0.78 / 8 / 0.000 & 1.00 / 9 / 0.000 & 0.89 / 8 / 0.000 & 0.89 / 9 / 0.008 & 0.56 / 10 / 0.008 \\
    LLM $\cup$ FCA & 1.00 / 9 / 0.000 & 1.00 / 9 / 0.000 & 1.00 / 9 / 0.000 & 1.00 / 9 / 0.008 & 0.89 / 10 / 0.000 \\
    \bottomrule
  \end{tabular}
\end{table}

\section{Proposer capability}
\label{app:tier}
All five configurations received the identical prompt, refinement loop and validity filter; only the proposer differs. Round-1 figures isolate raw proposal quality.

\begin{table}[h]
  \caption{$\Recall_{0.9}$ of the LLM-only pool, seed~0, round~1. FCA scores $0.78$ (D1) and $1.00$ (D2) on the same instances.}
  \centering\small
  \begin{tabular}{@{}lcc@{}}
    \toprule
    proposer & D1 & D2 \\
    \midrule
    ChatGPT (GPT-5.6, free tier) & $0.22$ & $0.33$ \\
    Claude Sonnet 5, medium effort & $0.11$ & $0.56$ \\
    Claude Opus 5, max effort & $\mathbf{0.67}$ & $0.56$ \\
    Gemini Flash, regular thinking & $0.44$ & $\mathbf{0.78}$ \\
    Gemini Pro, extended thinking & $0.44$ & $0.67$ \\
    \bottomrule
  \end{tabular}
\end{table}

Tier dominates model family. Sonnet~5 medium to Opus~5 max is $0.11 \to 0.67$ on D1 and, after two rounds, $0.33 \to 1.00$; Gemini Flash to Pro is $0.44 \to 0.44$ on D1 and $0.78 \to 0.67$ on D2---no gain, then a loss. An early pilot in which a small model appeared to beat a large one was therefore a tier artifact rather than a model-family finding.

\section{Where the syntactic baseline breaks: tail exponent}
\label{app:zipf}
Rule usage follows $p_k \propto k^{-s}$; larger $s$ starves the rarest rules of witnesses. Three seeds, both domains, FCA only, since the boundary is a property of the baseline.

\begin{table}[h]
  \caption{FCA degrades monotonically once the tail is heavy enough. `min witnesses' is the rarest ground-truth rule's support.}
  \centering\small
  \begin{tabular}{@{}lccccccc@{}}
    \toprule
    $s$ & $0$ & $1$ & $1.5$ & $2$ & $2.5$ & $3$ & $3.5$ \\
    \midrule
    min witnesses & $66$ & $76$ & $49$ & $30$ & $15$ & $10$ & $5$ \\
    FCA $\Recall_{0.9}$ & $1.00$ & $1.00$ & $1.00$ & $0.96$ & $0.87$ & $0.82$ & $0.72$ \\
    FCA $|\hat\Pi|$ & $9.0$ & $9.0$ & $9.0$ & $11.2$ & $12.0$ & $13.2$ & $14.5$ \\
    \bottomrule
  \end{tabular}
\end{table}

The main experiments use $s=2.5$. FCA is \emph{exactly} optimal for $s \leq 1.5$: on mildly-tailed policies a syntactic generator needs no help, and a semantic proposer earns its cost only past the point where the tail leaves rules with single-digit support.

\section{Generation cost against log size}
\label{app:cost}
\begin{table}[h]
  \caption{Two seeds, $s=2.5$. Exhaustive FCA is infeasible, so the enumeration is capped at $2\times10^{5}$ sampled groups; the price of the cap is coverage of the closure space, not wall time. The prompt is flat in $m$ by construction, the proposer seeing a fixed sample of the log.}
  \centering\small
  \begin{tabular}{@{}lccccc@{}}
    \toprule
    $m$ & $200$ & $400$ & $800$ & $1600$ & $3200$ \\
    \midrule
    exhaustive $\binom{m}{3}$ & $1.3\times10^{6}$ & $1.1\times10^{7}$ & $8.5\times10^{7}$ & $6.8\times10^{8}$ & $5.5\times10^{9}$ \\
    FCA time (s, capped) & $0.56$ & $0.70$ & $1.07$ & $1.07$ & $1.55$ \\
    SG time (s) & $0.02$ & $0.04$ & $0.07$ & $0.14$ & $0.36$ \\
    LLM prompt (chars) & $20{,}261$ & $20{,}247$ & $20{,}250$ & $20{,}212$ & $20{,}154$ \\
    \bottomrule
  \end{tabular}
\end{table}

Latency and per-token cost are not reported: the manual protocol of Appendix \ref{app:repro} makes them unmeasurable, and we prefer a stated gap to an estimate.

\section{Anonymization ablation in detail}
\label{app:anon}
Attribute names become \texttt{A1}--\texttt{A10} and values opaque, shuffled \texttt{v001}--\texttt{v041}. The substitution is applied only when rendering the prompt and inverted before scoring, so SG and FCA run over the untouched instance through identical code. Their invariance is structural rather than observed---both use only subset relations over predicate bitmasks, and renaming is a bijection on bit positions---and we verified it numerically, all six metrics identical in both conditions.

\begin{table}[h]
  \caption{Claude Opus~5 max effort, seed~0, two rounds in each condition.}
  \centering\small
  \begin{tabular}{@{}lcccc@{}}
    \toprule
    & \multicolumn{2}{c}{D1 (witnesses scarce)} & \multicolumn{2}{c}{D2 (plentiful)} \\
    & named & anonymized & named & anonymized \\
    \midrule
    LLM $\Recall_{0.9}$ & $1.000$ & $\mathbf{0.333}$ & $0.778$ & $0.778$ \\
    LLM held-out recall & $1.000$ & $0.750$ & $0.968$ & $0.968$ \\
    unsafe proposals, round 2 & --- & $28/30$ & --- & --- \\
    refinement lift, recall & $+0.333$ & $+0.111$ & --- & --- \\
    \midrule
    SG, FCA (all metrics) & \multicolumn{4}{c}{identical across conditions} \\
    \bottomrule
  \end{tabular}
\end{table}

Two readings follow. Semantics is needed to \emph{use} the refinement feedback and not only to propose: with names one round adds $0.333$ recall, without them $0.111$, and $28$ of $30$ round-two proposals covered an observed denial. And anonymization costs nothing on D2, where every rule is well witnessed---semantic knowledge and log evidence substitute for one another. Note also that the anonymized proposer scores $0.333$ against FCA's $0.778$ on D1: anonymization does not merely degrade the proposer, it reverses which method a practitioner should choose. We read this as a measured utility cost of attribute-name anonymization, not as an argument against it.

\section{Reproducibility and provenance}
\label{app:repro}
Model queries were issued through consumer web interfaces on 2026-08-13 (ChatGPT GPT-5.6 free tier; Claude Sonnet~5 medium effort; Claude Opus~5 max effort; Gemini Flash regular thinking; Gemini Pro extended thinking), API access being unavailable to us. Prompts are produced by the same function an API path would call, so the two are comparable, and every reply is stored verbatim. Replies are parsed leniently---chat interfaces add code fences and commentary---but validated strictly: any predicate outside the schema, or two values for one attribute, is discarded and counted. Consumer tiers do not pin a model version, so exact re-execution is impossible; every reported number is instead recomputable from the recorded replies. We release the harness, the domain generators, the prompt templates, and the replies at \url{https://github.com/koods1/semantic-abac-mining}, archived at \url{https://doi.org/10.5281/zenodo.22167397}.

To make concrete how much work the selector does, take D1 seed~$0$: the proposer offered $22$ rules across the two rounds, the validity filter discarded the $7$ that granted an observed denial, and greedy covered $S^{+}$ with $9$ of the surviving $15$. Emitting the raw model output as the policy would have shipped $22$ rules, $7$ of them unsound. Aggregated over all $175$ proposals from the five configurations, $53\%$ were unsound at round~1 and $27\%$ at round~2, and \emph{none} was malformed.

\section{Acknowledgments}
\label{ack}
We acknowledge the use of Claude to edit this document.

\end{document}